\newtheorem{theorem}{Theorem}
\newtheorem{lemma}{Lemma}
\newcommand\bbR{\mathbb{R}}
\newcommand\dd{\,\mathrm{d}}
\newcommand\de{\mathrm{e}}
\newcommand\calT{\mathcal{T}}
\newcommand\pd[2]{\dfrac{\partial {#1}}{\partial {#2}}}
\newcommand\opd[2]{\dfrac{\dd {#1}}{\dd {#2}}}
\newcommand \ri{\mathrm{i}}
\numberwithin{equation}{section}
\theoremstyle{remark} \newtheorem{remark}{Remark}}
\title{Stationary Wigner Equation with Inflow Boundary Conditions:
  Will a Symmetric Potential Yield a Symmetric Solution?}
\author{Ruo Li\thanks{HEDPS \& CAPT, LMAM \& School of Mathematical
    Sciences, Peking University, Beijing, China, email: {\tt
      rli@math.pku.edu.cn}.}, ~~ Tiao Lu\thanks{HEDPS \& CAPT, LMAM \&
    School of Mathematical Sciences, Peking University, Beijing,
    China, email: {\tt tlu@math.pku.edu.cn}.},
    ~~ Zhangpeng Sun \thanks{School of Mathematical Sciences, 
	  Peking University, Beijing, China, 
	  email: {\tt sunzhangpeng@pku.edu.cn}.}
}
\begin{document}
\maketitle
\begin{abstract}
  Based on the well-posedness of the stationary Wigner equation with
  inflow boundary conditions given in \cite{ALZ00}, we prove without
  any additional prerequisite conditions that the solution of the
  Wigner equation with symmetric potential and inflow boundary
  conditions will be symmetric. This improve the result in
  \cite{Taj2006} which depends on the convergence of solution
  formulated in 
the Neumann series. By numerical studies, we present
  the convergence of the numerical solution to the symmetric profile
  for three different numerical schemes.
This implies that the upwind schemes can also yield a symmetric 
numerical solution, on the contrary to the argument given 
in \cite{Taj2006}.
  
  \vspace*{4mm}
  \noindent {\bf Keywords:} Wigner equation; Inflow boundary
  conditions; Well-posedness.
\end{abstract}

\section{Introduction} \label{sec:intro} 
The stationary dimensionless Wigner equation can be written as
\cite{Wigner1932}
\begin{equation}\label{eq:Wigner}
v \pd{f(x,v)}{x}  + \int \dd v' V_{w}(x,v-v') f ( x, v') = 0 ,  
\end{equation}
where the Wigner potential $V_{w}(x,v)$ is related to the potential
$V(x)$ through 
\begin{equation}\label{eq:Vw}
V_{w}(x,v) = \frac{\ri}{2\pi} \int \dd y ~\de^{-\ri v y } \left[
V(x+y/2)-V(x-y/2) \right].
\end{equation}
We are considering the inflow boundary conditions proposed in
\cite{Frensley1987} and analyzed in \cite{ALZ00,Taj2006}, which
specifies the inflow electron distribution function $f(-l/2,v)$, $v>0$
at the left contact ($x=-l/2$) and $f(l/2,v)$, $v < 0$ at the right
contact ($x=l/2$). In \cite{Taj2006}, the Wigner equation with a
symmetric potential ($V(-x)=V(x)$) is considered. It was declared in
\cite{Taj2006} that \eqref{eq:Wigner} with a symmetric potential gives
always a Wigner function symmetric in the spatial coordinate:
$f(x,v)=f(-x,v)$, no matter what profile of the injected carrier
distribution is. Actually, it is not true for all the symmetric potential
functions. For example, when $V(x) = 1-x^2/2$, the Wigner equation
will reduce to its classical counterpart, the Boltzmann equation (the
	Liouville equation) 
\begin{equation}\label{eq:Boltzmann}
  v \pd{f(x,v)}{x} + x \pd{f(x,v)}{v} = 0. 
\end{equation}
One may figure out the solution of \eqref{eq:Boltzmann} by examining
rolling balls to a hill with a shape $V(x) = 1-x^2/2$. When one roll a
ball with the initial kinetic energy less than the height of $V(x)$
(which is $1$ at $x=0$), it is impossible to find the ball on the
right hand side. This implies that a symmetric potential can not
assure a symmetric distribution. Let us put forward a question: 

\begin{center}
\begin{minipage}{0.8\textwidth}
  {\it For which class of symmetric potential the equation
    \eqref{eq:Wigner} always has a symmetric solution for any inflow
    boundary conditions?}
\end{minipage}
\end{center}

In this paper, we answer this question partly by proving that for a
symmetric and periodic potential with a period $l$, the Wigner
equation \eqref{eq:Wigner} with inflow boundary conditions has one and
only one symmetric solution. The proof hereafter is based on the
elegant approach of the well-posedness of the stationary Wigner
equation with inflow boundary conditions in \cite{ALZ00}. The proof in
\cite{ALZ00} is given only for the discrete velocity version of the
Wigner equation providing that $0$ is excluded from the discrete
velocity points adopted. What under our consideration is the
continuous version of the Wigner equation \eqref{eq:Wigner} with the
periodic condition of the potential function. By the periodicity of
$V(x)$, we first simplify the Wigner equation to a form equivalent to
its discrete velocity version. Then we are able to make use of the
well-posedness theorem in \cite{ALZ00} to prove the symmetric
property.

In \cite{Taj2006}, a center finite-difference method was proposed to
provide a symmetric solution. It was declared therein that the
numerical solution will give an asymmetric solution in case of the
first-order upwind finite difference scheme used. This indicates that
the first-order upwind finite difference scheme will not converge to
the exact solution at all, which is predicted theoretically to be
symmetric in $x$. It argued that the strange numerical behavior is due
to that the center scheme is more physical than the upwind scheme. On
doubt of this point of view, we revisit the numerical example in
\cite{Taj2006} using three different numerical schemes, including the
two schemes used in \cite{Taj2006} and a second-order upwind finite
difference schemes. Our numerical results demonstrate that the
first-order finite difference method and the second-order upwind finite
difference scheme can also give symmetric solutions as long as the
gird size is small enough. Moreover, the symmetry of the numerical
solution can be quantitatively bounded by the accuracy of the
numerical solution. Thus it is found out that whether the numerical
solution is symmetric is not only related to numerical scheme, 
but also related to the numerical accuracy.

The remain part of this paper is arranged as below: in Section 2, we
prove the symmetry of the solution of \eqref{eq:Wigner} with symmetric
potential and in Section 3, the numerical study of the example in
\cite{Taj2006} is presented.



\section{Symmetry of Solution of \eqref{eq:Wigner} with Symmetric Potential}
In general, the well-posedness of the boundary value problem (BVP) for the stationary Wigner equation  
\begin{equation} \label{eq:BVP}
v \pd{f(x,v)}{x} + \int \dd v' V_{w}(x,v-v')f(x,v') = 0,
  x\in(-l/2,l/2), v \in \bbR,     
\end{equation}
with the inflow boundary conditions  
\begin{equation} \label{eq:BVPBC}
f(-l/2,v)=f_b(v), \text{ for } v>0; \quad  f(+l/2,v)=f_b(v), \text{
  for } v<0, 
\end{equation}
is an open problem \cite{ALZ00}. 

At first, we expand the potential $V(x)$ into a Fourier series. The
Fourier series is uniformly converged to $V(x)$ under mild conditions,
e.g., if $V(x)$ is periodic, continuous, and its derivative $V'(x)$ is
piecewise continuous. Particularly, if $V(x)$ is symmetric with
respect to $y$-axis, i.e. it is an even function, we can expand it to
cosine series.  In this paper, we consider a special
case in which the potential function $V(x)$ defining $V_w$ through
\eqref{eq:Vw} is a periodic ($V(x+l)=V(x)$), even function with an 
absolutely convergent Fourier series, i.e., 
\begin{equation} \label{eq:VxExpansion}
  V(x)=a_0+\sum_{n=1}^{\infty}a_n\, \cos(2n\kappa x), 
\end{equation}
where $\kappa = \frac{\pi}{l}$ and $\sum_{n=0}^{\infty}|a_n|$ is
finite. Several sufficient conditions for $V(x)$ to have an absolutely
convergent Fourier series are given in \cite{Katznelson1976}, e.g., if
$V(x)$ is absolutely continuous in $[-l/2,l/2]$ and $V'(x) \in
L^2[-l/2,l/2]$, then $V(x)$ has an absolutely convergent Fourier series.
We will prove that the boundary value problem (BVP)
\eqref{eq:BVP}, \eqref{eq:BVPBC} is well-posed, and its solution is
symmetric, i.e., $f(x,v)=f(-x,v)$, $v \neq 0$, no matter what profile
of the injected carrier distribution is, providing that $V(x)$ has an
expansion \eqref{eq:VxExpansion}.  For $V(x)$ with an expansion in
\eqref{eq:VxExpansion}, a direct calculation of \eqref{eq:Vw} yields
\begin{equation} \label{eq:VwPeriodic} 
V_w(x,v) = \sum_{n=1}^{\infty} a_n \sin(2n\kappa x)
  \left(\delta(v+n\kappa)-\delta(v-n\kappa)\right).
\end{equation}
Thus, plugging \eqref{eq:VwPeriodic} into the Wigner equation
\eqref{eq:BVP}, we reformulate the stationary dimensionless Wigner
equation as  
\begin{equation} \label{eq:WignerDiff}
v\frac{\partial f(x,v)}{\partial x}+
\sum_{n=1}^{\infty} a_n \sin(2n\kappa x)
\left(f(x,v+n\kappa)-f(x,v-n\kappa)\right)=0, 
\end{equation}
where $x \in(-l/2,l/2)$ and $v \in \bbR$. 

Observing \eqref{eq:WignerDiff}, one can see that $v$ can be viewed as a
parameter, and \eqref{eq:WignerDiff} can be regarded as a set of
ordinary differential equations, and $f(x,v)$ only couples with  
$f(v+n\kappa), n \in \mathbb{Z}$.
The BVP \eqref{eq:WignerDiff} with
inflow boundary conditions \eqref{eq:BVPBC} can
be decoupled into independent ordinary differential systems indexed by
$s \in (0,\kappa)$
\begin{equation} \label{eq:BVPODE}
v_i^s\opd{f(x,v_i^s)}{ x}+
\sum_{n=1}^{\infty} a_n \sin(2n\kappa x)
\left(f(x,v_{i+n}^s)-f(x,v_{i-n}^s)\right)=0, \quad x\in(-l/2,l/2),  i \in
\mathbb{Z}, 
\end{equation}
under the inflow boundary conditions 
\begin{equation} \label{eq:BC1}
f(-l/2,v_i^s)  = f_b(v_i), \text{ for } i \geqslant  0; \quad
f(l/2,v_i^s)  = f_b(v_i),  \text{ for } i < 0, \ 
\end{equation}
where $v_i^s = i \kappa + s$. Notice that we have to neglect the case 
$s = 0$ until now.
\begin{remark}
If $s=0$, \eqref{eq:BVPODE} becomes an algebraic-differential system
and its property is quite different, and it will bring 
difficulty to the theoretical analysis \cite{ALZ00}. And
to the authors' best knowledge, $0$ is also excluded from the sampling 
velocity set in all numerical simulation papers  e.g., 
\cite{Frensley1987,Jensen1990}. 
\end{remark}


Let $f^s_i(x)$ denote $f(x,v_i^s)$ and the vector 
$\mathbf{f}(x)=\{f^s_i(x),i\in \mathbb{Z}\}$ denote the discrete
velocity Wigner function on the discrete velocity set $\mathbf{v}^s 
= \{ v_i^s := i\kappa +s  , i \in \mathbb{Z} \}$. The discrete velocity
$v_i^s \in \mathbb{R}$ are strictly increasing, i.e.,
$v_i^s<v^s_{i+1}$. 
Considering the singularity of the equation when $v=0$, 
we have excluded $0$ from $\mathbf{v}^s$ by setting $s\neq 0$.
Henceforth, we omit the superscript $^s$
of $f^s_i(x)$, $\mathbf{f}^s$, $\mathbf{v}^s$ and $v_i^s$ when no
confusion happens.

Then we rewrite the stationary Wigner equation \eqref{eq:BVPODE} to be 
its discrete counterpart as
\begin{equation}\label{eq:BVPMatrix}
  \boldsymbol{T}\mathbf{f}_x-\boldsymbol{A}(x)\mathbf{f}=0,\,\,-l/2<x<l/2,
\end{equation}
subject to the inflow boundary conditions \eqref{eq:BC1} rewritten into  
\begin{equation}\label{eq:BC}
  f_i(-l/2)=f_b(v_i), \text{ for } i\geqslant 0, \quad
  f_i(l/2)=f_b(v_i), \text{ for } i<0,
\end{equation}
with a given sequence $\mathbf{f}_b=\{f_{b}(v_i),i\in \mathbb{Z}\}$. Here,
\begin{equation*}
  \boldsymbol{T}=\text{diag}\{\cdots, v_{-2}, v_{-1}, v_0, v_{1}, v_{2}
  \cdots  \}
\end{equation*}
and
\begin{equation*}
  \boldsymbol{A}(x)=\left(
  \begin{array}{ccccccc}
	&\ddots & & & & &\\
\cdots &a_1\sin(2\kappa x)	  &0 & -a_1\sin(2\kappa x)& -a_2\sin(4\kappa x)
	   &-a_3\sin(6\kappa x)  &\cdots\\
   \cdots &a_2\sin(4\kappa x) & a_1\sin(2\kappa x)& 0 & -a_1\sin(2\kappa x)&
					 -a_2\sin(4\kappa x)&\cdots\\
				 \cdots &a_3\sin(6\kappa x) & a_2\sin(2\kappa x) 
						& a_1\sin(2\kappa x)&0 & -a_1\sin(2\kappa x)&\cdots\\
		 & & & & &\ddots &
  \end{array}
  \right)
\end{equation*}
which is a skew-symmetric matrix.

Let the linear space $H_w = l^2(\mathbb{Z}; w_i)$ equipped with the canonical
weighted $l^2$ norm
\[ \| \mathbf{f} \|_{H_w} = \left(\sum_{i \in \mathbb{Z}} w_i | f_i |^2
\right)^{1/2},\] where $w_i = |w(v_i)|$ and $w(v)$ be a weight
function. Particularly, if $w(v) = v$, the norm of $H_v$ is
\[ \| \mathbf{f} \|_{H_v} = \left(\sum_{i \in \mathbb{Z}} |v_i| | f_i |^2
\right)^{1/2}, \] and if $w(v) = 1$, the norm of $H_1$ is
\[ \| \mathbf{f} \|_{H_1} = \left(\sum_{i \in \mathbb{Z}} | f_i |^2
\right)^{1/2}. \] We let $H = H_1$ and $B(H)$ to be the bounded linear
operator on $H$. 
We have the following
\begin{lemma}
  If the Fourier coefficients of $V(x)$, $\{a_n\}_{n=0}^{\infty} 
  \in l^1$, then $\boldsymbol{A}(x)\in B(H)$.
\end{lemma}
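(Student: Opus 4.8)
The plan is to recognize $\boldsymbol{A}(x)$ as a banded-in-offset Toeplitz (discrete convolution) operator on $H = l^2(\mathbb{Z})$ whose generating sequence is absolutely summable, and then to bound its operator norm by that $l^1$ norm. First I would read the matrix entries directly off the displayed form of $\boldsymbol{A}(x)$ (equivalently off \eqref{eq:BVPODE}): for $n \geq 1$ one has $A_{i,i+n}(x) = -a_n\sin(2n\kappa x)$ and $A_{i,i-n}(x) = a_n\sin(2n\kappa x)$, with vanishing diagonal. The two key structural observations are that each entry depends only on the offset $j-i$ (so $\boldsymbol{A}(x)$ is Toeplitz) and that $|A_{ij}(x)| \leq |a_{|i-j|}|$ since $|\sin| \leq 1$.

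Next I would estimate the absolute row and column sums. By the Toeplitz structure every row and every column has the same absolute sum, namely
\[
  \sum_{j\in\mathbb{Z}}|A_{ij}(x)| \;=\; 2\sum_{n=1}^{\infty}|a_n\sin(2n\kappa x)| \;\leq\; 2\sum_{n=1}^{\infty}|a_n| \;\leq\; 2\sum_{n=0}^{\infty}|a_n| \;<\;\infty,
\]
where finiteness is exactly the hypothesis $\{a_n\}_{n=0}^{\infty}\in l^1$. I would then invoke the Schur test: a matrix whose row sums and column sums are both uniformly bounded by a constant $C$ induces a bounded operator on $l^2$ with norm at most $C$. This yields $\|\boldsymbol{A}(x)\|_{B(H)} \leq 2\sum_{n=1}^{\infty}|a_n|$ uniformly in $x$, hence $\boldsymbol{A}(x) \in B(H)$. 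The same bound can alternatively be obtained from Young's convolution inequality $\|g*\mathbf{f}\|_{l^2} \leq \|g\|_{l^1}\|\mathbf{f}\|_{l^2}$ after rewriting the action of $\boldsymbol{A}(x)$ as a discrete convolution.

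Since this is a boundedness statement for an absolutely summable convolution kernel, I do not expect a genuine obstacle; the only point deserving care is that each output component $(\boldsymbol{A}(x)\mathbf{f})_i = \sum_{n\geq 1}a_n\sin(2n\kappa x)\left(f_{i+n}-f_{i-n}\right)$ is itself an infinite series, so one should first confirm that it converges for $\mathbf{f}\in H$ and that the resulting sequence again lies in $H$. Both facts follow from Cauchy--Schwarz together with $\{a_n\}\in l^1 \subset l^2$, and indeed the Schur-test estimate delivers the convergence and the norm bound simultaneously.
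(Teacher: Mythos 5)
Your proposal is correct and matches the paper's proof in all essentials: both identify $\boldsymbol{A}(x)$ as a discrete convolution (Toeplitz) operator whose kernel is dominated entrywise by $|a_{|n|}|$, and both conclude $\|\boldsymbol{A}(x)\|_{B(H)} \leqslant 2\sum_{n\geqslant 1}|a_n|$ uniformly in $x$. The only cosmetic difference is that the paper certifies the bound via Young's inequality $\|g\ast\mathbf{f}\|_{l^2}\leqslant\|g\|_{l^1}\|\mathbf{f}\|_{l^2}$ while you lead with the Schur test, and for a convolution kernel these are interchangeable, as you yourself note.
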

\begin{proof}
Observing
  \[
( \boldsymbol{A}(x) \mathbf{f}(x))_k
   =  \sum_{i=-\infty}^{\infty}
       a_{|k-i|} \sin (2(k-i)\kappa x) f_{i}(x),   
  \]
one can find that $\boldsymbol{A}(x)\mathbf{f}(x)$ 
is the discrete convolution
i.e.,
\[
 \boldsymbol{A}(x) \mathbf{f}(x) = \mathbf{v}_d(x) \ast
\mathbf{f}(x),
\]
where $\mathbf{V}_{d}(x) = \{ a_{|i|} \sin (2i\kappa x), i \in
\mathbb{Z} \}$. 
We can apply the Young's inequality to the discrete convolution  
\[
\boldsymbol{A}(x)\mathbf{f}(x) = \mathbf{V}_d(x) \ast \mathbf{f} (x) 
\]
to have
\[
\| \boldsymbol{A}(x)\mathbf{f}(x)\|_2
 =  \| \mathbf{V}_d(x) \ast \mathbf{f} (x) \| 
   _{2}  \leqslant \| \mathbf{V}_d(x) \|_1 \|\mathbf{f}(x)\|_2. 
\]
On the other hand, we have 
\[
\| \mathbf{V}_d \|_1 = \sum_{i=-\infty}^{\infty}
| a_{|i|}\sin (2i\kappa x)| \leqslant 2 \sum_{n=1}^{\infty}
|a_n| \leqslant 2 \| \{a_n\}_{n=0}^{\infty}\|_{l^1} < \infty. 
\] 
Thus by noting that the $H$-norm is the same as the $l^2$-norm, we have 
\[
\| \mathbf{A}(x) \mathbf{f}(x) \|_{H} \leqslant 2 \|\{a_{n}\}_{0}^{\infty} \|
	\|\mathbf{f}\|_{H},
\]
which gives the conclusion $\mathbf{A}(x) \in B(H)$ and 
$\| \mathbf{A}(x) \| \leq 2 \|\{a_{n}\}_{0}^{\infty} \|$. 
\end{proof}

The proof of the well-posedness of the discrete velocity problem has 
been given in \cite{ALZ00}, and here we present the conclusion below.
\begin{lemma}(Theorem 3.3 in \cite{ALZ00})
  Assume $\mathbf{f}_b=(\mathbf{f}_b^+,\mathbf{f}_b^-)\in H_v$,
  and let $\boldsymbol{A}(x)\in B(H)$ be skew-symmetric for all $x\in
  [-l/2,l/2]$. Then one has:
\item(a) If $\boldsymbol{A}\in L^1((-l/2,l/2),B(H))$, the BVP
  \eqref{eq:BVPMatrix} and \eqref{eq:BC} has a unique mild solution
  $\mathbf{f}(x)\in W^{1,1}( (-l/2,l/2), H_v)$.  Also,
  $\boldsymbol{T}\mathbf{f}_x\in L^1\left((-l/2,l/2),H \right)$;
\item(b) If $\boldsymbol{A}(x)$ is strongly continuous in x on
  $[-l/2,l/2]$ and uniformly bounded in the norm of $B(H)$ on
  $[-l/2,l/2]$, then the solution from (a) is classical, i.e.,
  $\mathbf{f}\in C^1([-l/2,l/2], H_v)$. Also,
  $\boldsymbol{T}\mathbf{f}_x\in C([-l/2,l/2],H)$.
\end{lemma}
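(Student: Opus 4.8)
The plan is to treat the split inflow problem as an abstract non-autonomous linear ODE in $H$ and to close its two-point boundary coupling using the current identity produced by the skew-symmetry of $\boldsymbol{A}(x)$. First I would remove $\boldsymbol{T}$ from in front of $\mathbf{f}_x$: because $s\in(0,\kappa)$, every velocity obeys $|v_i|=|i\kappa+s|\geqslant\min\{s,\kappa-s\}>0$, so $\boldsymbol{T}=\mathrm{diag}\{v_i\}$ is boundedly invertible on $H$ and \eqref{eq:BVPMatrix} is equivalent to $\mathbf{f}_x=\boldsymbol{T}^{-1}\boldsymbol{A}(x)\mathbf{f}$. Under hypothesis (a) one has $\boldsymbol{A}\in L^1((-l/2,l/2),B(H))$, hence $\boldsymbol{T}^{-1}\boldsymbol{A}(\cdot)\in L^1((-l/2,l/2),B(H))$ by the preceding lemma. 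The standard theory of linear evolution equations in a Banach space (the Dyson/Picard series converges in $B(H)$ precisely because the coefficient is integrable in $x$) then yields a strongly continuous, boundedly invertible evolution family $U(x,y)$. This already settles existence, uniqueness and continuous dependence for the one-sided Cauchy problem, in which all of $\mathbf{f}(-l/2)$ is prescribed.

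The genuine difficulty is that \eqref{eq:BC} does not prescribe the full state at one endpoint. Splitting $H=H^{+}\oplus H^{-}$ into the positive- and negative-velocity modes ($i\geqslant 0$ and $i<0$) with projections $P^{\pm}$, the inflow conditions fix $P^{+}\mathbf{f}(-l/2)=\mathbf{f}_b^{+}$ and $P^{-}\mathbf{f}(l/2)=\mathbf{f}_b^{-}$ but leave the outflow trace $\psi:=P^{-}\mathbf{f}(-l/2)$ free. Propagating with $U(l/2,-l/2)$ and imposing the right-hand condition reduces the BVP to the single operator equation $\mathcal{S}\psi=\mathbf{f}_b^{-}-P^{-}U(l/2,-l/2)\mathbf{f}_b^{+}$, where $\mathcal{S}:=P^{-}U(l/2,-l/2)\big|_{H^{-}}$. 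Proving $\mathcal{S}$ invertible is the main obstacle, and it is exactly the point where one must follow \cite{ALZ00} rather than simply sum a Neumann series as in \cite{Taj2006}: since $\boldsymbol{T}^{-1}$ is large on the modes with $|v_i|$ near zero, a crude norm bound on the iteration need not be a contraction and the series may diverge.

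The resolution comes from skew-symmetry. Pairing \eqref{eq:BVPMatrix} with $\mathbf{f}$ in $H$ gives $\tfrac{\mathrm{d}}{\mathrm{d}x}\langle\boldsymbol{T}\mathbf{f},\mathbf{f}\rangle_H=2\,\mathrm{Re}\,\langle\boldsymbol{A}(x)\mathbf{f},\mathbf{f}\rangle_H=0$, so the current $J(x)=\sum_i v_i|f_i(x)|^2$ is independent of $x$. Equating $J(-l/2)$ with $J(l/2)$ and sorting terms by the sign of $v_i$ yields the flux balance $\|\mathbf{f}_b^{+}\|_{H_v}^2+\|\mathbf{f}_b^{-}\|_{H_v}^2=\|P^{+}\mathbf{f}(l/2)\|_{H_v}^2+\|\psi\|_{H_v}^2$, i.e. total incoming flux equals total outgoing flux. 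Applied to the difference of two solutions (zero inflow) this forces $J\equiv0$, hence $\mathbf{f}(-l/2)=0$ and then $\mathbf{f}\equiv0$ by the evolution family, which gives injectivity of $\mathcal{S}$; applied to a genuine solution it is the a priori bound $\|\psi\|_{H_v}\leqslant\|\mathbf{f}_b\|_{H_v}$, uniform in and independent of any iteration. I would then obtain existence not from the Neumann series but from a finite-velocity (Galerkin) truncation: each truncated problem is a finite-dimensional two-point BVP solvable by the same injectivity argument, the flux balance bounds the family uniformly in $H_v$, and a weak-limit/compactness argument produces a solution of the full problem.

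Finally I would read off the regularity. In case (a) the integrated form of the equation shows $\mathbf{f}$ is absolutely continuous into $H_v$ with $\boldsymbol{T}\mathbf{f}_x=\boldsymbol{A}\mathbf{f}\in L^1((-l/2,l/2),H)$, placing $\mathbf{f}\in W^{1,1}((-l/2,l/2),H_v)$; the weight $|v_i|$ is the natural one because it is exactly what makes both the current $J$ and the boundary traces finite. In case (b), strong continuity and uniform boundedness of $\boldsymbol{A}(x)$ upgrade $U(x,y)$ to class $C^1$, and bootstrapping the Duhamel representation gives $\mathbf{f}\in C^1([-l/2,l/2],H_v)$ with $\boldsymbol{T}\mathbf{f}_x\in C([-l/2,l/2],H)$. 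I expect the invertibility of $\mathcal{S}$ — equivalently, turning the flux balance into surjectivity of the boundary map — to be the one step demanding real care, whereas the reduction to a Cauchy problem and the regularity bookkeeping are routine.
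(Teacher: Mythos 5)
The first thing to note is that the paper does not prove this lemma at all: it is stated as a quotation of Theorem~3.3 of \cite{ALZ00}, and the paper's entire ``proof'' is that citation. So the relevant comparison is with the argument in the cited reference, and your sketch reconstructs its essential mechanism correctly: since $\min_i|v_i^s|=\min\{s,\kappa-s\}>0$, the matrix $\boldsymbol{T}$ is boundedly invertible and the system becomes a non-autonomous linear ODE in $H$ with $L^1$ coefficient; the two-point coupling reduces to an operator equation for the unknown outflow trace; and skew-symmetry makes the current $\langle\boldsymbol{T}\mathbf{f},\mathbf{f}\rangle_H$ invariant, which is exactly the a~priori estimate bounding outflow by inflow and giving uniqueness. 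Where you genuinely diverge is the existence/surjectivity step. The cited proof does not pass through a velocity-truncated Galerkin limit: it reads the conservation law as saying that the propagator $U=U(l/2,-l/2)$ is unitary for the indefinite form $\langle\boldsymbol{T}\cdot,\cdot\rangle$, i.e.\ $U^{*}\boldsymbol{T}U=\boldsymbol{T}$ and $U\boldsymbol{T}U^{*}=\boldsymbol{T}$, whence the block $P^{-}UP^{-}$ and its adjoint are both bounded below on $H_v^{-}$ and the boundary operator is boundedly invertible outright. That shortcut avoids precisely the two places where your route still owes details: why each skew-symmetry-preserving truncation is uniquely solvable from injectivity alone, and how to pass to the limit in the infinite convolution coupling while recovering the boundary conditions --- as written, ``weak-limit/compactness argument'' is a program rather than a proof. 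Your uniqueness half and the regularity bookkeeping (deducing $\mathbf{f}\in W^{1,1}((-l/2,l/2),H_v)$ from $\boldsymbol{T}\mathbf{f}_x=\boldsymbol{A}\mathbf{f}\in L^1(H)$ and $|v_i|\geqslant c>0$, and the $C^1$ upgrade in case~(b)) are fine. One small correction: for fixed $s$ the Picard iteration for the one-sided Cauchy problem does converge on all of $[-l/2,l/2]$ by concatenating short intervals, since $\|\boldsymbol{T}^{-1}\|$ is finite; the smallness obstruction you attribute to $|v_i|$ near zero is really a loss of uniformity in $s$, not a failure of the Neumann series for the initial value problem --- the genuine obstruction, as you correctly identify elsewhere, is the invertibility of the boundary map.
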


Now we are ready to have the following theorem.
\begin{theorem}
Assume $V(x)$ is a periodic, even function with an absolutely
convergent Fourier series, i.e., $\{ a_n \}_{n=0}^\infty \in l^1$.
 For $\forall s \in (0,\kappa)$, let
  $\mathbf{f}_b = (\mathbf{f}_b^+, \mathbf{f}_b^-) \in H_v$ defined by
  \eqref{eq:BC} on $\mathbf{v}^s$. Then the BVP \eqref{eq:BVPMatrix},
  \eqref{eq:BC} has a unique solution $\mathbf{f}(x)$, 
 and for the discrete velocity Wigner function on the 
 discrete velocity set $\mathbf{v}^s$, 
$\mathbf{f}(x)$ is a mild solution in $W^{1,1}( (-l/2,l/2),H_v)$.
\end{theorem}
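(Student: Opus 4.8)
The plan is to recognize that this theorem is precisely an application of Lemma 2(a) (Theorem 3.3 of \cite{ALZ00}) to the reformulated system \eqref{eq:BVPMatrix}--\eqref{eq:BC}, so that the proof reduces to verifying that all hypotheses of that lemma hold for every $s\in(0,\kappa)$. First I would record that the discrete velocity set $\mathbf{v}^s=\{i\kappa+s\}$ contains no zero element: since $s\in(0,\kappa)$, each $v_i^s=i\kappa+s$ lies strictly between $i\kappa$ and $(i+1)\kappa$, so $\boldsymbol{T}=\mathrm{diag}\{v_i\}$ has all nonzero diagonal entries and the framework of \cite{ALZ00} is applicable. The boundary-data hypothesis $\mathbf{f}_b=(\mathbf{f}_b^+,\mathbf{f}_b^-)\in H_v$ is assumed outright in the statement.

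Next I would dispatch the two structural hypotheses on $\boldsymbol{A}(x)$. Skew-symmetry for each fixed $x$ is immediate from the explicit entries: since $\bigl(\boldsymbol{A}(x)\bigr)_{ki}=a_{|k-i|}\sin(2(k-i)\kappa x)$, swapping $k\leftrightarrow i$ flips the sign of the sine while leaving $a_{|k-i|}$ unchanged. That $\boldsymbol{A}(x)\in B(H)$ for each $x$ is exactly Lemma 1, which moreover supplies the uniform operator-norm bound $\|\boldsymbol{A}(x)\|\le 2\|\{a_n\}_{n=0}^{\infty}\|_{l^1}$, independent of $x$.

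The only genuinely nontrivial hypothesis is $\boldsymbol{A}\in L^1((-l/2,l/2),B(H))$ in the Bochner sense, which requires both integrability of the norm and strong measurability of $x\mapsto\boldsymbol{A}(x)$. Integrability is immediate from the uniform bound, giving $\int_{-l/2}^{l/2}\|\boldsymbol{A}(x)\|\,\dd x\le 2l\,\|\{a_n\}_{n=0}^{\infty}\|_{l^1}<\infty$. For measurability I would establish the stronger fact that $x\mapsto\boldsymbol{A}(x)$ is continuous in the operator norm: reusing the convolution representation and Young's inequality from Lemma 1 gives $\|\boldsymbol{A}(x)-\boldsymbol{A}(y)\|\le\sum_{i}|a_{|i|}|\,|\sin(2i\kappa x)-\sin(2i\kappa y)|$, where each summand tends to $0$ as $y\to x$ and is dominated by the summable term $2|a_{|i|}|$, so dominated convergence forces $\|\boldsymbol{A}(x)-\boldsymbol{A}(y)\|\to 0$. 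Since a norm-continuous map on the compact interval has separable range, it is strongly measurable, and hence $\boldsymbol{A}\in L^1((-l/2,l/2),B(H))$.

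With every hypothesis of Lemma 2(a) verified, I would invoke it to conclude that the BVP \eqref{eq:BVPMatrix}, \eqref{eq:BC} has a unique mild solution $\mathbf{f}(x)\in W^{1,1}((-l/2,l/2),H_v)$ with $\boldsymbol{T}\mathbf{f}_x\in L^1((-l/2,l/2),H)$, which is exactly the assertion of the theorem. I expect the measurability/integrability verification to be the main point of care: the uniform bound trivializes the $L^1$-norm condition, so the real content is confirming that $\boldsymbol{A}$ is a bona fide Bochner-integrable $B(H)$-valued function, for which the operator-norm continuity argument is the cleanest route. As an aside, this continuity already delivers the stronger hypotheses of Lemma 2(b), so the solution is in fact classical; but since the theorem asserts only the $W^{1,1}$ mild regularity, part (a) suffices.
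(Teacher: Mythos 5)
Your proposal is correct and follows essentially the same route as the paper: reduce the theorem to an application of Lemma 2(a), using Lemma 1 for $\boldsymbol{A}(x)\in B(H)$ and the stated assumption for $\mathbf{f}_b\in H_v$. You are in fact more careful than the paper, which silently skips the verification that $\boldsymbol{A}\in L^1((-l/2,l/2),B(H))$ (measurability plus the uniform norm bound) and the explicit check of skew-symmetry; your norm-continuity argument supplies these cleanly and, as you note, would even yield the classical regularity of part (b).
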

\begin{proof}
  It is clear that $\forall s \in (0,\kappa)$, the corresponding
  discrete velocity Wigner function $\mathbf{f}(x)$ satisfies the BVP
  \eqref{eq:BVPMatrix} and \eqref{eq:BC}. By Lemma 1, $A(x) \in B(H)$ and by
  the assumption, $\mathbf{f}_b = (\mathbf{f}_b^+, \mathbf{f}_b^-) \in
  H_v$, thus the requirement of Lemma 2 is fulfilled. Applying Lemma 2,
  we have that $\mathbf{f}(x) \in W^{1,1}( (-l/2,l/2),H_v)$. This ends
  the proof.
\end{proof}

Based on the well-posedness of the above Wigner equation, 
the solution of the BVP  \eqref{eq:BVPMatrix} and \eqref{eq:BC} 
satisfies the following initial value problem (IVP)
\begin{equation}\label{eq:IVP}
\opd{\mathbf{f}(x)}{x} =  T^{-1} \boldsymbol{A}(x) \mathbf{f}(x),
\quad x \in (-l/2,l/2), 
\end{equation}
with $\mathbf{f}(x)$ at $x=x_1$ is a given vector.  
One can define a propagator $\calT_{[x_1,x_2]}$ via the solution of 
the IVP \eqref{eq:IVP} \cite{Pazy1992}, 
 i.e.,
\[
\mathbf{f}(x_2) = \calT_{[x_1,x_2]} \mathbf{f}(x_1). 
\]
Clearly, the operator $\calT_{[x_1, x_2]}$ is
invertible and  \[ \calT_{[x_1, x_2]}^{-1} =
\calT_{[x_2, x_1]}. \] 
More properties of  $\calT_{[x_1,x_2]}$ can be found in 
\cite{Pazy1992}.

Moreover, if the potential is symmetric, i.e., $V(x)=V(-x)$, we have 
the following lemma.
\begin{lemma}
If $V(x)$ is a periodic, even function with an absolutely
convergent Fourier series, i.e., $\{ a_n \}_{n=0}^\infty \in l^1$,
then $\calT_{[0, x]} = \calT_{[0,-x]}$, $\forall x \in
  (-l/2, l/2)$.
\end{lemma}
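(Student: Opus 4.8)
The plan is to exploit the single structural fact that makes the claim work: the coefficient matrix $\boldsymbol{A}(x)$ is an \emph{odd} function of $x$. Indeed, from the convolution formula in Lemma 1, the $(k,i)$ entry of $\boldsymbol{A}(x)$ is $\pm a_{|k-i|}\sin(2(k-i)\kappa x)$, and since $\sin$ is odd we get $\boldsymbol{A}(-x)=-\boldsymbol{A}(x)$ for every $x\in(-l/2,l/2)$. Writing $M(x):=\boldsymbol{T}^{-1}\boldsymbol{A}(x)$ (note $\boldsymbol{T}^{-1}$ is a fixed diagonal operator, independent of $x$, and bounded because $|v_i|\geqslant\min(s,\kappa-s)>0$), this immediately gives $M(-x)=-M(x)$. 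This oddness is the whole engine of the proof; everything else is a uniqueness argument for a linear evolution equation.

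Next I would reformulate the propagator as the solution of an operator-valued initial value problem. By definition $\mathbf{f}(y)=\calT_{[0,y]}\mathbf{f}(0)$ solves \eqref{eq:IVP}, so setting $g(y):=\calT_{[0,y]}$ (a family of bounded operators parametrised by $y\in(-l/2,l/2)$) we have, for arbitrary initial data and hence as an operator identity,
\[
\od{g(y)}{y}=M(y)\,g(y),\qquad g(0)=I .
\]
Then I would introduce the reflected family $\tilde{g}(y):=g(-y)=\calT_{[0,-y]}$ and differentiate it via the chain rule:
\[
\od{\tilde{g}(y)}{y}=-\,\od{g}{y}\Big|_{-y}=-M(-y)\,g(-y)=M(y)\,\tilde{g}(y),
\]
where the last equality uses $M(-y)=-M(y)$. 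Since also $\tilde{g}(0)=g(0)=I$, the reflected family $\tilde{g}$ satisfies \emph{exactly the same} linear IVP as $g$.

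The conclusion then follows from uniqueness: $g$ and $\tilde{g}$ solve one and the same well-posed evolution problem, so $g(y)=\tilde{g}(y)$ for all $y$, i.e.\ $\calT_{[0,x]}=g(x)=g(-x)=\calT_{[0,-x]}$ for every $x\in(-l/2,l/2)$, which is the assertion. The main obstacle is not the algebra but justifying the two operator-analytic points rigorously: that $g(y)=\calT_{[0,y]}$ is genuinely differentiable in $y$ in the operator topology with derivative $M(y)g(y)$, and that the resulting linear operator IVP has a unique solution in the relevant space (so that $g=\tilde g$ is forced). Both are supplied by the well-posedness established in Lemma 2 together with the standard properties of the evolution family $\calT_{[x_1,x_2]}$ (in particular invertibility, $\calT_{[x_1,x_2]}^{-1}=\calT_{[x_2,x_1]}$, and the evolution-system identities) recorded from \cite{Pazy1992}; one only needs $M\in L^1((-l/2,l/2),B(H))$, which holds since $\boldsymbol{A}(x)\in B(H)$ is uniformly bounded and $\boldsymbol{T}^{-1}$ is bounded. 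I would therefore state these as the hypotheses being invoked rather than reprove them, keeping the argument centered on the oddness-plus-uniqueness mechanism.
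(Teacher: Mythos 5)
Your proposal is correct, and it rests on exactly the same structural fact as the paper's proof --- the oddness $\boldsymbol{A}(-x)=-\boldsymbol{A}(x)$ inherited from the sine factors --- but it deploys that fact through a genuinely different mechanism. The paper works at the level of the solution vector: it recasts the IVP as the integral equation $\mathbf{f}=\mathbf{f}(x_1)+K_{[x_1,x_2]}\mathbf{f}$, proves $K$ is a contraction on intervals of length below $\tilde\delta=\min_j|v_j|/C$, shows by an explicit change of variables $y\mapsto -y$ that every Picard iterate $\mathbf{f}^{(n)}$ is even in $x$, passes to the limit of the Neumann series to get $\mathbf{f}(x)=\mathbf{f}(-x)$ on $[-\delta,\delta]$, and then patches the result outward interval by interval to cover $(-l/2,l/2)$. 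You instead work at the level of the propagator itself: $g(y)=\calT_{[0,y]}$ and its reflection $\tilde g(y)=g(-y)$ satisfy the same linear operator IVP $g'=M(y)g$, $g(0)=I$ with $M=\boldsymbol{T}^{-1}\boldsymbol{A}$, so uniqueness forces $g\equiv\tilde g$ globally in one step. Your route is shorter and eliminates the continuation bookkeeping entirely, at the cost of importing existence/uniqueness for the operator-valued linear ODE from the standard Banach-space theory (which is legitimate here: $\boldsymbol{T}^{-1}$ is bounded since $|v_i^s|\geqslant\min(s,\kappa-s)>0$ and $\|\boldsymbol{A}(x)\|$ is uniformly bounded by Lemma 1, so $M\in L^1((-l/2,l/2),B(H))$ and Picard--Lindel\"of applies); the paper's route is more self-contained in that its Neumann-series construction simultaneously supplies the uniqueness it needs. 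One small point worth making explicit if you write this up: it suffices to run your uniqueness argument strongly, i.e.\ for $g(y)\mathbf{f}(0)$ with each fixed initial vector, which sidesteps any worry about differentiability in the operator-norm topology.
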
 
\begin{proof}
The IVP \eqref{eq:IVP} can be recast into an integral equation 
  \begin{equation}\label{eq:IntEq}
    \mathbf{f}(x) = \mathbf{f}(x_1) +
    \int_{x_1}^x T^{-1} A(y) \mathbf{f}(y) \dd y
	= \mathbf{f}(x_1)+K_{[x_1,x_2]}\mathbf{f}(x), \quad x \in [x_1,x_2]
  \text{ or } x \in [x_2,x_1],
  \end{equation}
  where $K_{[x_1,x_2]}$ is defined by   
  \[ K_{[x_1,x_2]} \mathbf{f}(x) = \int_{x_1}^{x} T^{-1} A(y)
  \mathbf{f}(y) \dd y, \text{ for } x \in [x_1,x_2] \text{ or } 
  x \in [x_2, x_1] \] 
  We have that 
  \[ \| K_{[x_1,x_2]} \mathbf{f}(x)
  \|_{H,\infty} \leqslant \dfrac{C |x_1 - x_2| }{\min_{i \in \mathbb{Z}} |v_i|} \|\mathbf{f}(x)
  \|_{H,\infty}, \]
  where the norm of the vector function $\mathbf{g}(x), x \in
  [x_1,x_2]$ is defined by 
  \[
  \|\mathbf{g}(x)\|_{H,\infty} = \text{sup}_{x\in [x_1,x_2] }
  \|\mathbf{g}(x)\|_{H}, 
 \] 
  $C$ is the twice of the $l^1$ norm of the Fourier coefficients 
  of $V(x)$. Noticing that $\min_{j \in \mathbb{Z}} |v_j| > 0$,
  we have that 
  \[ \| K_{[x_1,x_2]} \|_{C\left([x_1,x_2],H\right)} < 1, \] 
 if $|x_1 - x_2| < \tilde{\delta} = \min_{j \in \mathbb{Z}} |v_j|/C >
 0$. 
  By applying the Neumann series to \eqref{eq:IntEq}, we have 
  \[ \mathbf{f}(x) = (I-K_{[x_1,x_2]})^{-1} \mathbf{f}^{(0)}(x) =
  \sum_{n=0}^\infty K_{[x_1,x_2]}^n \mathbf{f}^{(0)}(x), \] 
or 
\begin{equation}
\mathbf{f}(x) = \lim_{n\rightarrow \infty} \mathbf{f}^{(n)}(x), 
\end{equation}
where 
\begin{equation}
\begin{split}
& \mathbf{f}^{(0)}(x) = \mathbf{f}(x_1), \quad x \in [x_1,x_2] \text{
  or } [x_2,x_1],  \\
& \mathbf{f}^{(n+1)}(x) =
\mathbf{f}(x_1)+K_{[x_1,x_2]}\mathbf{f}^{(n)}(x), ~~ x \in [x_1,x_2]
\text{ or } [x_2,x_1],
 ~ n= 0,1,\cdots.
\end{split}
\end{equation}
The Neumann series converges if $|x_2-x_1| \leq \delta <
\tilde{\delta}$. 
  $V(x)$ is symmetric, i.e., $V(x) = V(-x)$, so we see that 
  $\boldsymbol{A}(-x) = -\boldsymbol{A}(x)$. 
  At the same time, $\mathbf{f}^{(0)}(x), x \in [-\delta,\delta]$ is symmetric 
  (actually $\mathbf{f}^{(0)}(x) = \mathbf{f}(0), x \in
   [-\delta,\delta]$ is a constant vector 
   function of $x$), we have
  \[\begin{array}{rcl}
    K_{[0,-\delta]} \mathbf{f}^{(0)}(x) &=& 
	\displaystyle \int_0^{-x} T^{-1} A(y) \mathbf{f}^{(0)}(y) \dd y \\ [4mm]
    &=& \displaystyle \int_0^{x} T^{-1} A(-y) \mathbf{f}^{(0)}(-y) \dd (-y) \\ [4mm]
    &=& \displaystyle \int_0^{x} T^{-1} (-A(y)) \mathbf{f}^{(0)}(y) (-\dd y) \\ [4mm]
    &=& \displaystyle \int_0^x T^{-1} A(y) \mathbf{f}^{(0)}(y) \dd y 
     = K_{[0,\delta]} \mathbf{f}^{(0)}(x),
  \end{array}\]
  which imples $\mathbf{f}^{(1)}(-x) = \mathbf{f}^{(1)}(x)$.
It is easy to see that we can obtain  
$\mathbf{f}^{(n)}(-x) = \mathbf{f}^{(n)}(x), n=2,3,\cdots$, thus
we have $\mathbf{f}(x) = \mathbf{f}(-x), x \in [-\delta,\delta]$, i.e., 
  \[\begin{array}{rcl}
    \calT_{[0 , -x]} \mathbf{f}(0) &=& \mathbf{f}(-x) 
     = \displaystyle \sum_{n=0}^\infty K[0,-\delta]^n
	 \mathbf{f}^{(0)}(x) \\
    &=& \displaystyle \sum_{n=0}^\infty K[0,\delta]^n
\mathbf{f}^{(0)}(x) 
     = \mathbf{f}(x) = \calT_{[0 , x]} \mathbf{f}(0), \quad x \in
	 [0,\delta]. 
  \end{array}\]
  Thus $\calT_{[0, x]} = \calT_{[0, -x]}$ for $|x| \leqslant 
  \delta$. We have shown that $\mathbf{f}(-x) 
  = \mathbf{f}(x) $ for $x\in [-\delta, \delta]$. 
  We define $\mathbf{f}^{0}(x) = 
  \mathbf{f}(-\delta), x \in [-\delta, -2\delta]
$, $\mathbf{f}^{(0)}(x) = \mathbf{f}(\delta), x\in [\delta,2\delta]$,
 and it is easy to see $\mathbf{f}^{(0)}(-x)=\mathbf{f}^{(0)}(x),
  x \in [-\delta,-2\delta]\bigcup [\delta,2\delta]$.  
 Then using the same argument, we have for $x\in[-\delta,-2\delta]$,
  \[\begin{array}{rcl}
    K_{[-\delta,-2\delta]} \mathbf{f}^{(0)}(x)
  &=& \displaystyle \int_{-\delta}^{x} T^{-1} A(y) 
  \mathbf{f}^{0}(y) \dd y \\ [4mm]
    &=& \displaystyle 
\int_{\delta}^{-x} T^{-1} A(-y) 
  \mathbf{f}^{0}(-y) \dd (-y) \\ [4mm]
    &=& \displaystyle \int_{\delta}^{-x} T^{-1} (-A(y))
  \mathbf{f}^{(0)}(y) (-\dd y) \\ [4mm]
    &=& \displaystyle \int_{\delta}^{-x} T^{-1} A(y) \mathbf{f}^{(0)}(y)
  \dd y 
     = K_{[\delta,2\delta]} \mathbf{f}^{(0)}(-x), 
  \end{array}\]
and $K_{[-\delta,-2\delta]}\mathbf{f}^{(n)}(x) =
K_{[\delta,2\delta]}\mathbf{f}^{(n)}(-x), \ n = 2,3,\cdots, $
  thus
  \[\begin{array}{rcl}
  \mathbf{f}(x) = \calT_{[0,x]} \mathbf{f}(0) 
  = \calT_{[0,-x]} \mathbf{f}(0) =  \mathbf{f}(-x), ~ \forall x \in [-2\delta,-\delta]
  \bigcup [\delta,2\delta]. 
  \end{array}\]
  This implies that the domain valid for $\calT_{[0, x]} = \calT_{[0, 
  -x]}$ can be continuously extended. We conclude that $\calT_{[0, 
  x]} = \calT_{[0, -x]}$ for $\forall x \in (-l/2, l/2)$.
\end{proof}
By the lemma above, we arrive the following theorem.
\begin{theorem}
  If $V(x)$ is a periodic, even function with an absolutely convergent
  Fourier series, the solution of the BVP \eqref{eq:BVP},
  \eqref{eq:BVPBC} satisfies \[ f(x, v) = f(-x,
  v), \quad \forall v \in \mathbb{R}, v \neq n \kappa, n \in
  \mathbb{Z} \] for any inflow boundary conditions.
\end{theorem}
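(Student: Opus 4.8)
The plan is to reduce the symmetry of the continuous solution $f(x,v)$ to the symmetry of the discrete velocity Wigner functions $\mathbf{f}^s(x)$ for each $s \in (0,\kappa)$, and then to read that symmetry off directly from the propagator identity $\calT_{[0,x]} = \calT_{[0,-x]}$ established in Lemma 3. First I would fix an arbitrary velocity $v \in \mathbb{R}$ with $v \neq n\kappa$ for all $n \in \mathbb{Z}$, and write it uniquely as $v = i\kappa + s = v_i^s$ with $i \in \mathbb{Z}$ and $s \in (0,\kappa)$. Because the Wigner potential \eqref{eq:VwPeriodic} is a sum of Dirac masses supported at $v = \pm n\kappa$, the reformulated equation \eqref{eq:WignerDiff} couples $f(x,v)$ only to the values $f(x, v \pm n\kappa)$ lying on the same lattice $\mathbf{v}^s$; hence $f(\cdot, v)$ is exactly the $i$-th component $f_i^s(\cdot)$ of the discrete velocity Wigner function $\mathbf{f}^s(x)$ solving the decoupled system \eqref{eq:BVPODE}, \eqref{eq:BC1}. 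By Theorem 1 this $\mathbf{f}^s$ exists and is the unique (mild) solution in $W^{1,1}((-l/2,l/2),H_v)$, so that pointwise evaluation in $x$ is meaningful through the embedding into $C([-l/2,l/2],H_v)$.

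Next I would invoke the propagator. Since $\mathbf{f}^s$ solves the IVP \eqref{eq:IVP} and $s \neq 0$ guarantees $\min_{i}|v_i^s| > 0$ (so $T^{-1}$ is bounded and the propagator $\calT_{[x_1,x_2]}$ is well-defined), uniqueness of the IVP gives $\mathbf{f}^s(x) = \calT_{[0,x]}\mathbf{f}^s(0)$ for every $x \in (-l/2,l/2)$, where the vector $\mathbf{f}^s(0)$ is the value at the origin determined by solving the two-point inflow problem. Applying Lemma 3, which asserts $\calT_{[0,x]} = \calT_{[0,-x]}$ for all $x \in (-l/2,l/2)$ whenever $V$ is even, periodic, and has an absolutely convergent Fourier series, I obtain
\[
\mathbf{f}^s(-x) = \calT_{[0,-x]}\mathbf{f}^s(0) = \calT_{[0,x]}\mathbf{f}^s(0) = \mathbf{f}^s(x).
\]
In particular the $i$-th components agree, i.e. $f_i^s(-x) = f_i^s(x)$, which is precisely $f(-x, v) = f(x, v)$ for the chosen $v$.

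Finally, since $s \in (0,\kappa)$ and $i \in \mathbb{Z}$ were arbitrary and together exhaust every velocity $v \neq n\kappa$, the identity $f(x,v) = f(-x,v)$ holds for all such $v$ and for any inflow data $\mathbf{f}_b$, completing the proof. I expect the genuine difficulty to reside entirely in Lemma 3 (the Neumann-series construction of $\calT$ on small intervals, combined with the even-potential antisymmetry $\boldsymbol{A}(-x) = -\boldsymbol{A}(x)$ and the continuation to all of $(-l/2,l/2)$), so that once Lemma 3 is granted the main theorem is essentially immediate. The only points requiring care here are the bookkeeping that links the single continuous velocity $v$ to the correct lattice component $f_i^s$, and the observation that the excluded velocities $v = n\kappa$ are exactly those forced out by the singular case $s = 0$, which is why the symmetry statement is asserted only off that lattice.
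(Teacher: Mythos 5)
Your proposal is correct and follows essentially the same route as the paper: reduce $v = i\kappa + s$ with $s \in (0,\kappa)$ to the discrete-velocity system on $\mathbf{v}^s$, then apply the propagator identity $\calT_{[0,x]} = \calT_{[0,-x]}$ from Lemma 3 to get $\mathbf{f}(x) = \mathbf{f}(-x)$ componentwise. Your version is slightly more explicit about the well-posedness underpinning (citing Theorem 1) and the lattice bookkeeping, but the argument is the same.
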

\begin{proof}
Since $v \neq n\kappa, n\in \mathbb{Z} $, there exists $s \in 
(0,\kappa)$ such that $v \in \mathbf{v}^s$. Thus $f(x,v)$ is an entry 
of $\mathbf{f}(x)$, which is Wigner function values at the discrete
velocity set $\mathbf{v}^s$.  From Lemma 3, we have that the discrete 
velocity Wigner function on $\mathbf{v}^s$ satisfies
  that \[ \mathbf{f}(x) = \calT_{[0, x]} \mathbf{f}(0) =
  \calT_{[0, -x]} \mathbf{f}(0) = \mathbf{f}(-x). \] 
  Then we have $f(x, v) = f(-x, v)$.
\end{proof}
\begin{remark}
  When $v=0$, the problem may be reduced to an ODE system in the same
  form as \eqref{eq:BVPMatrix} and \eqref{eq:BC}. The equation at
  $s=0$ in \eqref{eq:BVPODE} is formally turned into an algebraic
  constraint as
  \[
  \sum_{n=1}^{\infty} a_n \sin(2n \kappa x) \left( f_{n}(x)
    - f_{-n}(x)\right) = 0. 
  \]
  It is clear that if $f(x, 0) \equiv 0$, the well-posedness of the
  system is still valid. If $f(x, 0) \not \equiv 0$, this algebraic
  constraint above can not always be fulfiled, thus the existence of
  the solution is negative. One the other hand, the term involving the
  derivative in $x$ at $v = 0$ may be a $0 \times \infty$ form, since
  the derivative ${\partial f(x, 0)}/{\partial x}$ is not bounded by
  any regularity. This makes the algebraic constraint obtained above
  by formally dropping the first term is doubtable, which requires
  further investigation.
\end{remark}



\section{Numerical Study on Symmetry of Solution}
In order to verify the theoretical analysis in Theorem 2, we revisit
the example in \cite{Taj2006} which considers a particular
potential profile $V(x) = V_0 (1+\cos (2 \kappa x))$. The
corresponding Wigner potential $V_w$ in \eqref{eq:VwPeriodic} is
simply given by 
\begin{equation} \label{eq:VwNum}
V_w(x,v) = V_0 \sin (2\kappa x)
  \left(\delta(v+\kappa)-\delta(v-\kappa)\right).  
\end{equation} 
The boundary conditions are extremely simple, too. A mono-energetic 
carrier injects only from the left boundary, i.e., we set 
$f_b(v_i)$ in \eqref{eq:BC} to be   
\begin{equation} \label{eq:fb}
f_b(v_i) = \begin{cases} 1, & \text{if } i = 0, \\ 0, & \text{else},     
\end{cases}
\end{equation} 
where $v_i = (i+1/2) \kappa$ which means $s=\kappa/2$ (the index 
of the differential system \eqref{eq:BVPODE}).  

In this section, we will solve the ordinary differential system 
\eqref{eq:BVPMatrix}-\eqref{eq:BC} with $V_w$ in \eqref{eq:VwNum},
which reduces to   
\begin{equation} \label{eq:WignerNum}
v_i \opd{f_i(x)}{x} = V_0 \sin (2\kappa x) \left( f_{i+1}(x) -
	f_{i-1}(x)\right),
\quad x\in(-l/2,l/2), i \in \mathbb{Z}, 
\end{equation}
under inflow boundary conditions \eqref{eq:BC} with inflow data given 
in \eqref{eq:fb}.

Set the other parameters to be $V_0=20$, and $l=1$ which gives
$\kappa=\pi/l = \pi$.  The kinetic energy is $v_0^2/2$, which is lower
than the height of the potential energy in the middle of the device.
If using the classic mechanics (the Liouville equation or the
Boltzmann equation), one can not see the carrier in the right part of
the device. This implies the solution of the classic mechanics
equation will be asymmetric.

In the following part of the section, we
will show that the numerical solution of the Wigner equation will be
symmetric using three numerical schemes. The first scheme is the 
first-order upwind finite-difference method \cite{Frensley1987}, and 
the other two are second-order finite-difference methods. 
The second scheme is the second-order upwind finite-difference 
method used in many numerical simulation papers, e.g., 
\cite{Jensen1990}.  The authors of \cite{Taj2006}
used the first-order upwind finite-difference method and failed to 
get a symmetric solution, so they proposed a central finite-difference 
method based on physical argument, which is adopted as our third 
scheme.

\subsection{Three Finite-Difference Schemes}
We implement the finite difference methods on a uniform mesh. The
$x$-domain $[-l/2,l/2]$ is discretized with $N_x+1$ equally distanced
grid points $x_j = j\Delta x, j = 0,1,\cdots,N_x$ where $\Delta x =
l/N_x$. We have to  truncate 
$i\in \mathbb{Z}$ into a finite set $\{i: M \leqslant i \le M\}$.
$M$ depends on the inflow data and the potential strength $V_0$,
and in our current example, we set $M=40$, which is found large enough
for our numerical example. 
We denote $f_i(x)$ at $x=x_j$ by $f_{i,j}$.   
The first order upwind finite-difference scheme is obtained by
approximate $\left . \opd{f_i(x)}{x} \right|_{x_j}$ with 
\[
\left . \opd{f_i(x)}{x} \right| _{x_j} 
\approx 
\begin{cases}
\frac{ f_{i,j+1}-f_{i,j}}{\Delta x}, & \text{if } v_i < 0 , \\
\frac{ f_{i,j}-f_{i,j-1}}{\Delta x}, & \text{if } v_i > 0 , \\
\end{cases}
\]
thus yields the finite difference equations of \eqref{eq:WignerNum} as
\begin{equation}
 \left\{ \begin{array}{ll}
 v_i\dfrac{f_{i,j+1}-f_{i,j}}{\Delta x}=g_{i,j} ,
 & v_i<0, j=0,1,\cdots,N_x-1, \\
 v_i\dfrac{f_{i,j}-f_{i,j-1}}{\Delta x}=
g_{i,j} ,& v_i>0,  j=1,2,\cdots,N_x .\\
 \end{array}
  \right.
\end{equation}
where 
\begin{equation}\label{eq:gij}
 g_{i, j} = V_0 \sin(2\kappa x_j) (f_{i-1,j}-f_{i+1,j}).
\end{equation}
The second order upwind finite-difference method is obtained by approximate 
$\left . \opd{f_i(x)}{x} \right|_{x_j}$ with 
\begin{equation}
\left . \opd{f_i(x)}{x} \right|_{x_j}
\approx
\left \{
 \begin{array}{ll}
  \dfrac{-f_{i,j+2}+4 f_{i,j+1}-3 f_{i,j}}{2\Delta x} , &  v_i<0, \\ [4mm]
  \dfrac{f_{i,,j-2}-4 f_{i,j-1}+3 f_{i,j}}{2\Delta x} , &  v_i>0. 
 \end{array}
\right.
\end{equation}
The second order upwind scheme includes three nodes,
so the first order upwind scheme is used in the boundary cell
instead of the second order one. 

Both upwind schemes approximate $\opd{f_i}{x}$ at the grid point $x_j$
using the grid points on one side, while the central difference
proposed in \cite{Taj2006} approximates $\opd{f_i(x)}{x}$ at
$x_{j+1/2}= (x_j+x_{j+1})/2$ using the grid points $x_{j}$ and 
$x_{j+1}$. That is, the central scheme approximates 
$\left. \opd{f_i(x)}{x} \right|_{x_{j+1/2}}$ with 
\begin{equation}
\left . \opd{f_i(x)}{x} \right|_{x_{j+1/2}}
\approx
  \dfrac{f_{i,j+1}-f_{i,j}}{\Delta x }.   
\end{equation}
Then the right hand side of \eqref{eq:WignerNum} at $x_{j+1/2}$ is 
approximated by the average of its values at $x_{j}$ and $x_{j+1}$. 
Finally, the central scheme results in a difference equation as follows
\begin{equation}
  \left\{ \begin{array}{rl}
      v_i \dfrac{f_{i,j+1}-f_{i,j}}{\Delta x}= \dfrac{g_{i,j+1} + g_{i,j}}{2}, &
      v_i<0, j=0,1,\cdots,N_x-1, \\ [2mm]
      v_i\dfrac{f_{i,j}-f_{i,j-1}}{\Delta x}= \dfrac{g_{i,j+1} + g_{i,j}}{2}, &
      v_i>0, j=1,2,\cdots,N_x,
  \end{array}\right.
\end{equation}
where $g_{i,j}$ is given in \eqref{eq:gij}.

\subsection{Numerical Results}
We solve \eqref{eq:WignerNum} using the first order upwind 
finite-difference method on different meshes with
the grid numbers $N_x=100,200,400,800,
1600.$ The numerical results show us the Wigner 
distribution and the density function are clearly not symmetric 
in the spatial coordinate.  This is maybe the reason why the
authors of \cite{Taj2006} declares that the symmetric solution 
can not be obtained by the upwind scheme. 
But we keep refining the mesh by using $N_x=3200,\cdots,25600$, the 
numerical results tend to be symmetric as expected, which means that 
the first order upwind scheme can also give us a
symmetric numerical solution. 

We solve \eqref{eq:WignerNum} using the two second-order methods 
on the mesh with the grid number $N_x=100$.
The Wigner distribution obtained by using the central
finite-difference method with $N_x=100$ is shown on Figure 
\ref{fig-F_c1}. We can see from the figure that
\begin{enumerate}
\item The Wigner distribution function is strongly symmetric.
\item The incident particles with low energy can tunnel through the 
barrier.
\item The Wigner potential plays a role of  
a scattering mechanism and scatters carriers to higher energy state.
\item The Wigner distribution function is negative in some region,
  which is distinct from the classic distribution function.  
\end{enumerate}
\begin{figure}
 \centering
 \includegraphics[width=0.8\textwidth]{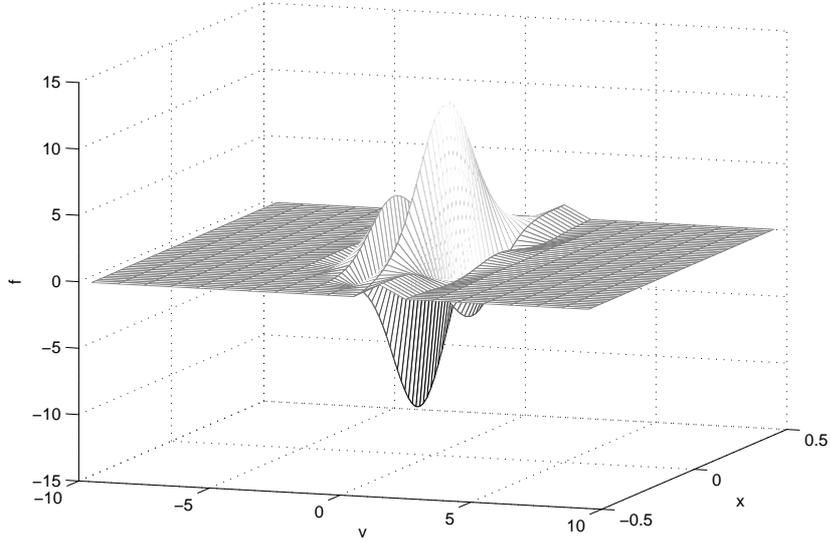}
 \caption{The distribution function obtained by using the central scheme
   on the mesh with $N_x =100$. 
   \label{fig-F_c1}}
\end{figure}

By using the three schemes, we compute the density $n(x_j)$ defined by 
\begin{equation}
	n(x_j) =\sum_{i=-\infty}^{+\infty}f_i(x_j).
\end{equation}
As shown in Figure \ref{fig-density123}, the 2nd-order methods 
give a symmetry density, and the 1st-order method also gives 
a symmetric density on a very fine mesh. 
\begin{figure}
\centering
\includegraphics[width=1.0\textwidth]{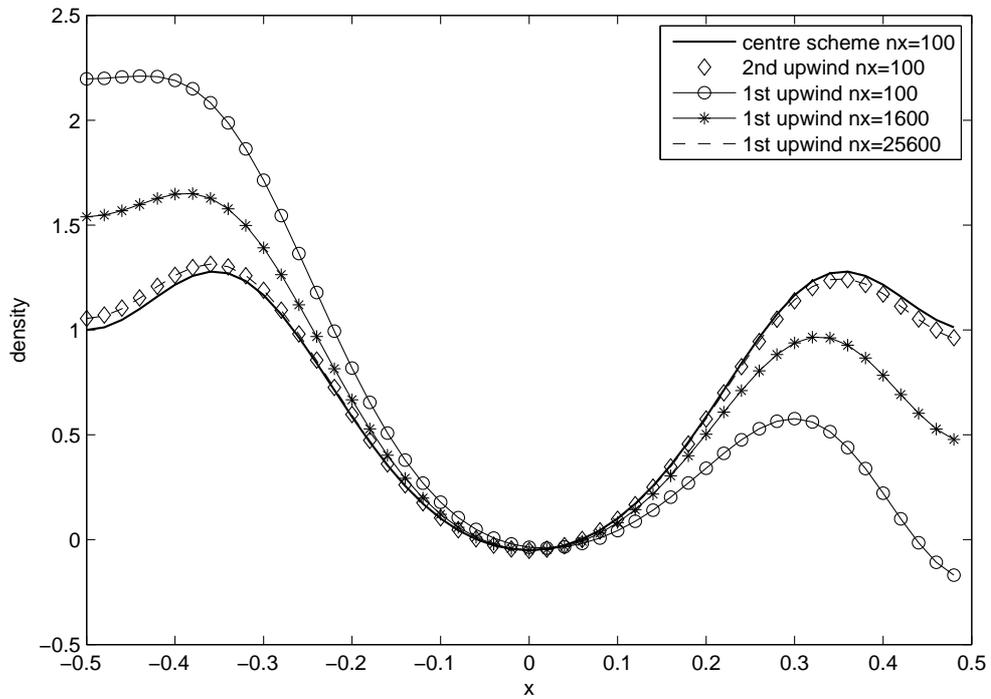}
\caption{ Density calculated by using the three schemes. \label{fig-density123}}
\end{figure}
The numerical solution of the 2nd order upwind scheme with $N_x=100$
is consistent with that of the 1st order upwind scheme with $N_x=25600$,
and the result of the central scheme is more symmetric. 
The difference between two second-order method is reflected from 
Figure \ref{fig-density123}, where the density obtained by using 
the second-order upwind finite-difference method with $N_x=100$ is 
almost coincident with that obtained by using the first-order upwind 
finite-difference method with $N_x=25600$ while the density obtained by
using the central scheme is more symmetric.

Figure \ref{fig-density123} gives us an intuitive understanding of the
symmetry of the solution. Next, we will define a symmetry error to 
compare the three schemes.  Define the symmetry error to be 
\begin{equation}
  e_{\rm sym}=\int \dd v \int \dd x |f(x,v)-f(-x,v)|.
\end{equation}
Numerically, the symmetry error can be approximated by
\begin{equation}
  \tilde{e}_{\rm sym}=\sum_{i} \sum_{j} |f_i(x_j)-f_i(-x_j)| \Delta x.
\end{equation}

The numerical symmetry errors obtained by using different schemes are
collected in Table \ref{tab:symmetric error}. It can be seen that the
numerical solution obtained by using the first order upwind scheme
becomes more and more symmetric as refining the mesh, the symmetry
error of the 2nd order upwind scheme with $N_x=100$ is about the same
with the 1st order upwind scheme with $N_x=25600$, and the solution
obtained by the central scheme is perfectly symmetric due to the
symmetry of the scheme itself. These are consistent with the results
in Figure \ref{fig-density123}.
\begin{table}[!hbp]
  \centering
  \begin{tabular}{|l|c|c|c|c|c|}
	\hline
	\hline
	$N_x$ & 100 & 400 & 1600 & 6400 & 25600 \\
	\hline
	1st upwind & 1.03 & 0.7666 & 0.4185 & 0.1502 & 0.0422\\
	\hline
	2nd upwind & 0.0462 & 7.446e-4 & 1.151e-5 & & \\
	\hline 
	central & 2.5966e-16 & & & & \\
	\hline
	\hline
  \end{tabular}
  \caption{Symmetry errors of the three schemes\label{tab:symmetric error}}
\end{table}



\section{Conclusion}
For the problem whether the solution of the stationary Wigner equation
with inflow boundary conditions will be symmetric if the potential is
symmetric in \cite{Taj2006}, we give a rigorous proof based on
\cite{ALZ00} under mild assumption on the regularity of the
potential. It is concluded that a certain kind of continuous Wigner
equation with inflow boundary condition can be reduced to the
discrete-velocity case, thus is well-posed. Furthermore, we
numerically studied the example in \cite{Taj2006} and pointed out that
the numerical solution will converge to the exact solution with
symmetry, even when the numerical scheme adopted is not symmetric  
if only accuracy is enough.

\section*{Acknowledgements}
This research was supported in part by the National Basic Research
Program of China (2011CB309704) and NSFC (91230107).


\end{document}